\newtheorem{theorem}{Theorem}
\title{Feedback Scheduling for Energy-Efficient Real-Time Homogeneous Multiprocessor Systems} 
\author{Mason Thammawichai and Eric C.\ Kerrigan
\thanks{Mason Thammawichai is with the Department of Aeronautics, Imperial College London, London SW7 2AZ, UK.
        {\tt\footnotesize m.thammawichai12@imperial.ac.uk}}%
\thanks{Eric C.~Kerrigan is with the Department of Electrical \& Electronic Engineering  and the Department of Aeronautics, Imperial College London, London SW7 2AZ, UK.
        {\tt\footnotesize e.kerrigan@imperial.ac.uk}}%
}
\begin{document}
\maketitle

\begin{abstract}
Real-time scheduling algorithms proposed in the literature are often based on worst-case estimates of task parameters. The performance of an open-loop scheme can be degraded significantly if there are uncertainties in task parameters, such as the execution times of the tasks. Therefore, to cope with such a situation, a closed-loop scheme, where feedback is exploited to adjust the system parameters, can be applied. We propose an optimal control framework that takes advantage of feeding back information of finished tasks to solve a real-time multiprocessor scheduling problem with uncertainty in task execution times, with the objective of minimizing the total energy consumption. Specifically, we propose a linear programming based algorithm to solve a workload partitioning problem and adopt McNaughton's wrap around algorithm to find the task execution order. The simulation results illustrate that our feedback scheduling algorithm can save energy by as much as 40\% compared to an open-loop method for two  processor models, i.e.\ a PowerPC 405LP and an XScale processor.     
\end{abstract}

\section{Introduction}
Computing devices, such as server farms, data centers, portable devices and desktops, will consume more than 14\% of global electricity consumption by 2020~\cite{vereecken2010}. As the performance and speed of processors increase, the  challenges in designing these future high-performance computing systems are processor power consumption and heat dissipation. Moreover, these systems may need to operate under tight energy requirements while guaranteeing a quality of service.

As specified by the Advanced Configuration and Power Interface (ACPI)~\cite{Acpi2010}, which is an open industry standard for device configuration as well as power and thermal management, the power usage of a device can be controlled by various methods. For example, by controlling the time in the idling power states, changing the operating frequency in the performance states or by putting a CPU to sleep in throttling states when the CPU temperature is critically high. 

Dynamic Voltage and Frequency Scaling (DVFS) techniques have been widely used as an energy management scheme in modern computing systems. Typically, a processor running at a higher clock frequency consumes more energy than a processor running at a lower clock frequency. Hence, DVFS techniques aim to reduce the power/energy consumption by dynamically controlling the CPU operating frequency/voltage to match the workload. Since timeliness is an important aspect for real-time systems, the main consideration in applying DVFS is to ensure that deadline constraints are not violated.

Though a lot of work has been proposed to solve real-time scheduling problems, most of them are based on the assumption that the computational task parameters, e.g.~the task's execution time, period and deadline, do not change. In other words, they are open-loop controllers. Though an open-loop scheduler can provide good performance in a predictive environment, the performance can be degraded in an unpredictable environment, where there are uncertainties in task parameters. Specifically, the actual execution time of the task can vary by as much as 87\% of measured worst-case execution times~\cite{Wegener2001}. Since it is often the case that the task parameters are based on the worst-case, it follows that the system workload is overestimated, resulting in higher energy consumption due to non-optimal solutions. Therefore, in this work, we aim to apply feedback methods from control theory to address a scheduling problem subjected to time-varying workload uncertainty.

Only a few works have adopted feedback methods from control theory to cope with a dynamic environment for real-time scheduling. For example, \cite{soria2005} proposed an energy-aware feedback scheduling architecture for soft real-time tasks for a uniprocessor. A proportional controller adjusts the workload utilization\footnote{The utilization of the task is defined as the ratio between the task execution time and its deadline. For this work, we will use the term `density' rather than utilization; in the literature, utilization is often used for a special case of a periodic taskset, i.e.\ when the task deadline is equal to its period.} through a variable voltage optimization unit. Specifically, the controlled variable is the energy savings ratio and the manipulated variable is the worst-case utilization. 

Similarly,~\cite{deepak2002} proposed a feedback method for estimating  execution times to improve the system performance, i.e.\ the number of tasks that meet  deadlines and the number of tasks that are admitted to the system. That is, the estimated execution time is calculated at each decision time interval based on the  deadline miss and  rejection ratios. 

In~\cite{zhu2005}, a feedback method was developed for a uniprocessor hard real-time scheduling problem with DVFS to cope with varying execution time tasksets. In the same manner, the actual execution time of the task is fed back to a PID controller to adjust the estimated execution time of the task, as well as the execution frequency. 

A two-level power optimization control on a multi-core real-time systems was proposed in~\cite{xing2011}. At the core-level, the utilization of each CPU is monitored and a DVFS scheme is implemented in response to uncertainties in task execution times in order to obtain a desired utilization. To further reduce power consumption, task reassignment and idle core shutdown schemes were employed at the processor level. 

All of the work in this area only consider feedback of real-time scheduling as regulation problems. However, our work will consider real-time multiprocessor scheduling as a constrained optimal control problem~\cite{mason2015}, which can be combined with a feedback scheme to handle uncertainties in an unpredictable scheduling environment, as is done in model predictive control~\cite{Mayne20142967}. Our proposed scheme would also be known as a slack reclamation scheme in the real-time scheduling literature, in which the slack time due to early completion of a task is exploited to reduce energy consumption by decreasing the operating speed of the remaining tasks in the system~\cite{zhu2001,jian2006}.     

The main contributions of this paper are:
\begin{itemize}
\item A feedback and optimal control framework is proposed to solve a real-time scheduling problem with uncertainty in task execution times on a homogeneous multiprocessor system with DVFS capabilities. 
\item A convex optimization formulation is proposed to solve a workload partitioning problem.
\item The first energy-optimal scheduling algorithm to solve multiprocessor scheduling with aperiodic tasksets.
\item Though we introduce the problem with discrete frequency level systems, the framework can  be applied to continuous frequency multiprocessor systems by simply replacing the workload partitioning algorithm by the nonlinear programming formulation proposed in~\cite{mason2015}.  
\end{itemize}

Details of the system model is given in Section~\ref{sec:model}. The feedback scheduling framework is presented in Section~\ref{sec:FS}. That is, Section~\ref{sec:FSc} describes scheduling as an optimal control problem, Section~\ref{sec:FSlp} presents an LP formulation to solve the problem and the overall feedback scheduling architecture is provided in Section~\ref{sec:feedback}. Simulation results to demonstrate the performance of our feedback algorithm are given in Section~\ref{sec:sim}. Lastly, we summarise the results and discuss future work in Section~\ref{sec:con}  

\section{Task and Processor Models}\label{sec:model}
A task $T_i$ is assumed to be aperiodic and defined as a triple $T_i:=(b_i,c_i,d_i)$, where $b_i$ is the task arrival time, $c_i$ is the estimated number of CPU cycles to complete the task and $d_i$ is the task relative deadline, i.e.\ a task $T_i$ arriving at time $b_i$ has a deadline at time $b_i+d_i$. The estimated minimum execution time $\underline{x}_i$ is the estimated execution time of the task $T_i$ when executed at the maximum clock frequency $f_{max}$, i.e~$\underline{x}_i:=c_i/f_{max}$. The minimum task density $\delta_i$ is defined as the ratio between the task minimum execution time and deadline, i.e.\ $\delta_i:=\underline{x}_i/d_i.$ The actual minimum execution time of the task $\underline{y}_i$ is the actual execution time when the task is executed at clock frequency $f_{max}$, i.e.\ $\underline{y}_i:=\gamma_i\underline{x}_i$, where $0<\gamma_i\leq 1$ is the estimation factor. Note that the actual execution time of the task is not known until the task has finished. We will assume that the tasks can be preempted at any time, i.e.\ the execution of the task on a processor can be suspended in order to start executing another task. Moreover, task migration is allowed, i.e.\  execution is allowed to be suspended on one processor and able to be continued on another processor. There is no delay with task preemption or migration, since we assume that the delay is added to the estimated task execution times or that the delay is negligible. Lastly, it will also be assumed that tasks do not have any resource or precedence constraints, i.e.\ the task is ready to start upon its arrival time.

For this work, we assume a practical processor model, i.e.\ a processor has a finite set of operating frequency levels. Additionally, the processors are homogeneous, that is, having the same set of operating frequencies and power consumptions. The processor voltage/frequency can be adjusted individually using a DVFS technique.

The energy consumed during the time interval $[t_1,t_2]$ is 
\begin{equation}
E(t_1,t_2):=\int_{t_1}^{t_2}{P(s(t))dt},
\end{equation}
where $P(s(t))$ is the instantaneous power consumption of executing a task at an execution speed $s(t)$, defined as the ratio between the operating frequency $f(t)$ to $f_{max}$, i.e.\ $s(t):=f(t)/f_{max}$. The  energy   consumed by executing and completing task $T_i$  at a constant speed~$s_i$ is the summation of the energy in the active and idle modes, hence
$E(t_1,t_2)
=\underline{x}_i(P_{active}(s_i)-P_{idle})/s_i+ P_{idle}(t_2-t_1)$,
where $P_{active}(s_i)$ 
is the  power while active and $P_{idle}$ is the idle power. Note that $P_{idle}(t_2-t_1)$ is not a function of  speed, hence can be omitted when minimizing  energy.

\section{Feedback Scheduling}\label{sec:FS}
\subsection{Continuous-time Optimal Control Problem}\label{sec:FSc}
This section recalls an optimal control formulation of a multiprocessor scheduling problem with the objective to minimize the total energy consumption~\cite{mason2015}. The problem statement is: Given $m$ homogeneous processors and $n$ real-time tasks, determine a schedule for all tasks within a time interval $[t_1,t_2]$ that solves the following infinite-dimensional continuous-time optimal control problem:
\begin{subequations} \label{cprob}
\begin{align}
& \underset{\begin{subarray}{c}
           x(\cdot),a(\cdot) \end{subarray}}{\text{minimize}} \mathrlap{\quad\int_{t_1}^{t_2}\sum_{i,k,q} a_{ik}^q(t)(P(s^q)-P_{idle})dt}\\
	& \text{subject to } \nonumber\\
	     &\quad x_{i}(b_i) = \underline{x}_i,&&\forall i \label{he1}\\ 
			 &\quad x_{i}(t) = 0,&&\forall i,t\notin [b_i,b_i+d_i) \label{he2}\\ 
       &\quad \dot{x}_{i}(t) = -\sum_{k,q} s^qa_{ik}^q(t),&&\forall i,t,\quad \text{a.e.} \label{he3}\\
			 &\quad \sum_{k,q} a_{ik}^q(t) \leq 1,&&\forall i,t \label{he4}\\
			 &\quad \sum_{i,q} a_{ik}^q(t) \leq 1,&&\forall k,t  \label{he5}\\
			 &\quad a_{ik}^q(t) \in \{0,1\},&&\forall i,k,q,t \label{he7} 
\end{align}
\end{subequations}
where $x_i(t)$ is the remaining estimated minimum execution time of task $T_i$, $a_{ik}^q=1$ denotes that processor $k$ executes task $T_i$ at speed level $q\in Q:=\{1,\ldots,\ell\}$ at time $t$, where $s^q$ is the corresponding speed and $\ell$ is the total number of non-idle speed levels of a processor. If  $I:=\{1,\ldots,n\},K:=\{1,\ldots,m\}$ then $\forall i,\forall k,\forall q,\forall t$ will be used as short-hand for $\forall i\in I,\forall k\in K,\forall q\in Q,\forall t\in [t_1,t_2]$, respectively. 

The objective is to minimize   energy consumption. The  estimated execution time and deadline constraints are specified in~(\ref{he1}) and~(\ref{he2}), respectively. The scheduling dynamic~(\ref{he3}) is represented by a flow model (an integrator) with the state $x$ and control input $a:=(a^1,\ldots,a^\ell)$. Constraints (\ref{he4}) and (\ref{he5}), respectively, ensure that at all times a task is not assigned to at most one non-idle processor and vice versa. Constraint~(\ref{he7}) indicates   assignment variables are binary.  

\subsection{Discrete-time Optimal Control Problem as an LP}\label{sec:FSlp}
It was shown in~\cite{mason2015} that for a practical system, where each processor has a discrete set of operating frequencies, the problem~(\ref{cprob}) can be simplified into two steps: (i) solving a workload partitioning problem using a linear programming (LP) formulation and (ii) given a solution to the workload partitioning problem, solve a task ordering problem using McNaughton's wrap around algorithm~\cite{mc1959}.

\subsubsection{Workload Partitioning}
By relaxing the constraint~(\ref{he7}) so that the value of $a$ is interpreted as the fraction of the task execution time during each discretization time interval, the workload partitioning problem can be formulated as a finite-dimensional LP (annotated as LP-DVFS). For this purpose, let $w_i^q[\mu]\in[0,1]$ denote the fraction of the  interval $[\tau_\mu,\tau_{\mu+1}]$ during which  task $T_i$ is to be executed at speed level $q$.

Let $T:=\{T_i\mid i\in I\}$ denote a taskset composed of all active tasks within $[t_1,t_2]$. Let $\{\tau_0,\tau_1,\ldots,\tau_N\}$ be the set of  times corresponding to the distinct task arrival times and deadlines within the time interval $[t_1,t_2]$, where $t_1=\tau_0<\tau_1<\ldots<\tau_N=t_2$. Let $U:=\{0,1,\ldots,N-1\}$ and define a task arrival time mapping $\Phi_b:T\rightarrow U$ by $\Phi_b(T_i):=\mu$ such that $\tau_\mu=b_i,~\forall T_i\in T$, a task deadline mapping $\Phi_d:T\rightarrow U\cup\{N\}$ by $\Phi_d(T_i):=\mu$ such that $\tau_\mu=b_i+d_i,~\forall T_i\in T$ and $\mathcal{U}_i:=\{\mu\in U\mid\Phi_b(T_i)\leq\mu<\Phi_d(T_i)\},~\forall i\in I$.

The workload partitioning statement is: Given $m$ homogeneous processors and a taskset $T$ with $n$ tasks, determine the fraction of task execution times  within each  time interval  that solves the following discrete-time optimal control problem:
\begin{subequations} \label{probLP}
\begin{align}
& \underset{\begin{subarray}{c}
       \xi[\cdot],w[\cdot]\end{subarray}}
			 {\text{minimize}} \quad\mathrlap{\sum_{\mu,i,q}(\tau_{\mu+1}-\tau_\mu)  w_{i}^q[\mu](P(s^q)-P_{idle})}\\
	& \text{subject to } \nonumber\\
	     &\xi_{i}[\Phi_b(T_{i})]=\underline{x}_i & &\forall i \label{lp1}\\ 
			 &\xi_{i}[\mu]= 0,  & &\forall i,\mu\notin \mathcal{U}_i \label{lp2}\\ 
       &\xi_{i}[\mu+1]=\xi_{i}[\mu]\nonumber\\&\qquad - (\tau_{\mu+1}-\tau_\mu)\sum_{q}s^qw_{i}^q[\mu], & &\forall i,\mu\in \mathcal{U}_i \label{lp3}\\   	
			 &\sum_{q}w_{i}^q[\mu] \leq 1,&&\forall i,\mu\in U \label{lp4}\\ 
			 &\sum_{i,q} w_{i}^q[\mu]\leq m,&&\forall \mu\in U\label{lp5}\\
			 &0 \leq w_{i}^q[\mu] \leq 1,&&\forall i,q,\mu\in U\label{lp6}
\end{align}
\end{subequations}
where the state $\xi_i[\mu]$ is the estimated minimum  execution time  of task~$T_i$  and $w_i^q[\mu]$ can be interpreted as the value of a control input at time instant~$\tau_\mu$.

The constraints on the dynamics (\ref{lp1})--(\ref{lp3}) correspond to \eqref{he1}--\eqref{he3}. Constraint~(\ref{lp4}) assures that a task will not be assigned to more than one processor at a time. 
Constraint~(\ref{lp5}) guarantees that the total workload during each time interval will not exceed the system capacity. Lastly,~(\ref{lp6}) provides the appropriate lower and upper bounds on $w_i^q[\mu]$. 

The  functions $\xi:U\cup\{N\}\rightarrow \mathbb{R}^n$ and $w:=(w^1,\ldots,w^\ell):U\rightarrow \mathbb{R}^{m \times \ell}$ map finite sets to the Euclidean space, hence it follows that~\eqref{probLP} is equivalent to a finite-dimensional LP with a tractable number of decision variables and constraints. Note that many of the components of the solution are always zero and that the LP is highly structured with sparse matrices and vectors. These facts can be exploited to develop  efficient tailor-made solvers, as in the literature on model predictive control~\cite{Mayne20142967}.

Note that  $w_i^q[\mu]$ does not have a subscript $k$ to indicate processor assignment, which is done during  task ordering.

\subsubsection{Task Ordering}
Given a solution to~(\ref{probLP}), we can find an execution order for all tasks within each time interval such that no task is executed on more than one non-idle processor at each time instant. This can be done  using McNaughton's wrap around algorithm~\cite{mc1959},  which is detailed in Algorithm~\ref{mc} for the problem considered here\footnote{Note that this version of McNaughton's algorithm is to simplify the presentation in this paper --- there could be better ways to order tasks and modes to minimise preemptions, migrations, etc.}. 
\begin{algorithm}[!t]
\caption{McNaughton's wrap around algorithm~\cite{mc1959}}\label{mc}
\begin{algorithmic}[1]
\State \bf{INPUT} $\{w_i^q[\mu]\in[0,1]\mid i\in I,q\in Q\}$
\State $\sigma_{ik}^q[\mu]\gets 0,\eta_{ik}^q[\mu]\gets 0, \forall i,k,q$
\State $k \gets 1$
\For{$i=1,\ldots,n$}
\For{$q=1,\ldots,\ell$}
\If{$i = 1$}
	\State $\eta^q_{11}[\mu]\gets w_1^q[\mu]$
\Else
		\If {$\eta^q_{(i-1)k}[\mu]+w_i^q[\mu] \leq k$}
			\State $\sigma^q_{ik}[\mu]\gets \eta^q_{(i-1)k}[\mu]$
			\State $\eta^q_{ik}[\mu]\gets \sigma^q_{ik}[\mu]+w_i^q[\mu]$
		\Else
			\State $\sigma^q_{ik}[\mu]\gets \eta^q_{(i-1)k}[\mu]$
			\State $\eta^q_{ik}[\mu]\gets 1$
			\State $\eta^q_{i(k+1)}[\mu]\gets w_i^q[\mu]-(\eta^q_{ik}[\mu]-\sigma^q_{ik}[\mu])$
			\State $k\gets k+1$
		\EndIf
\EndIf
\EndFor
\EndFor
\State \bf{RETURN} $\{(\sigma^q_{ik}[\mu],\eta^q_{ik}[\mu])\in[0,1]\times [0,1]\mid  i\in I,k \in K,q\in Q\}$
\end{algorithmic}
\end{algorithm}

The algorithm proceeds as follows for a given interval $[\tau_\mu,\tau_{\mu+1}]$.
The   fractions~$\{w_i^q[\mu]\in[0,1]\mid i\in I,q\in Q\}$ 
care aligned  in an  order by task, with modes  grouped together by task, along the real number line starting at zero. The line is split at each natural number 1, 2, etc., with each chunk  assigned to one processor. Tasks that have been split (called migrating tasks) are assigned to two different processors at non-overlapping time intervals. The algorithm returns  $\{(\sigma^q_{ik}[\mu],\eta^q_{ik}[\mu])\in[0,1]^2\mid  i\in I,k \in K,q\in Q\}$, which is used to define the start and end times of tasks on processors during an interval. 
Processor~$k$  starts to work on task $T_i$ at mode $q$ at time $\tau_\mu+\sigma_{ik}^q[\mu](\tau_{\mu+1}-\tau_\mu)$ and ends at time $\tau_\mu+\eta_{ik}^q[\mu](\tau_{\mu+1}-\tau_\mu)$. 

Consider the  taskset composed of four tasks are to be scheduled on two homogeneous processors with two non-idle modes. Suppose execution  fractions in a time  interval is as shown in Table~\ref{tab:ex1}. 
\begin{table}[t]
\caption{Execution workload partition example} \label{tab:ex1}
\centering
\begin{tabular}{|c|c|c|c|c|c|}\hline
Task  & $w_i^1[\mu]$ & $w_i^2[\mu]$ & Task  & $w_i^1[\mu]$ & $w_i^2[\mu]$  \\\hline
$T_1$  & 0.1 & 0.2 & $T_3$ & 0.2 & 0.4  \\\hline
$T_2$  & 0 & 0.5 & $T_4$  & 0.4 & 0  \\\hline
\end{tabular}
\end{table}     
Figure~\ref{fig:ex1} illustrates a feasible schedule of the taskset using McNaughton's wrap around algorithm.
\begin{figure}[t]
\centering
\subfloat[Tasks are aligned along the real number line.]{\includegraphics[width=\columnwidth]{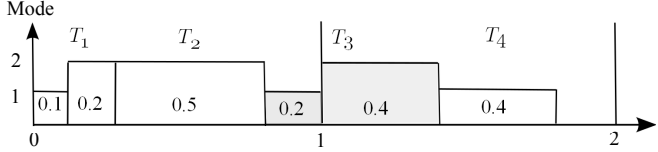}}\newline
\subfloat[Each chunk of length 1 is assigned to a processor.]{\includegraphics[width=0.8\columnwidth]{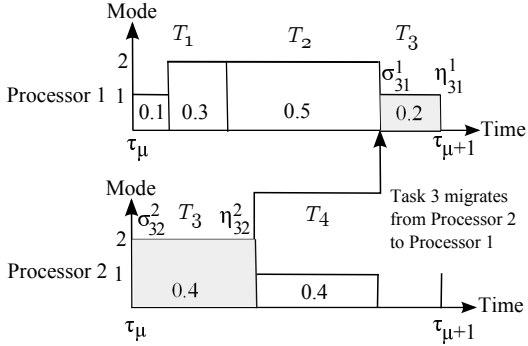}}
\caption{Feasible schedule at time interval $[\tau_\mu,\tau_{\mu+1}]$ obtained by McNaughton's wrap around algorithm, where the number in each box is $w_i^q[\mu]$.}
\label{fig:ex1}
\end{figure}

We are now in a position to state the following.
\begin{theorem} A solution  to~(\ref{cprob}) can be used to construct a solution to~(\ref{probLP}). Furthermore, a solution  to~(\ref{cprob}) can be constructed from a solution to~(\ref{probLP}) and the output from Algorithm~\ref{mc}.\end{theorem}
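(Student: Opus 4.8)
The plan is to prove the two assertions separately, in each case reducing the problem to per\nobreakdash-interval bookkeeping and then checking that the objective value is preserved, so that optimal points map to optimal points. The structural fact that makes both directions work is that the grid $\{\tau_0,\ldots,\tau_N\}$ consists of exactly the distinct arrival times and deadlines, so the set of active tasks is constant on each open interval $(\tau_\mu,\tau_{\mu+1})$; a task is released at some $\tau_{\Phi_b(T_i)}$ and expires at $\tau_{\Phi_d(T_i)}$, and never in between. Hence averaging a continuous schedule over an interval, or expanding an LP solution into a piecewise\nobreakdash-constant one, never mixes tasks across release or expiry boundaries.

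For the first claim, given a feasible $(x(\cdot),a(\cdot))$ for \eqref{cprob} I would define
\[
w_i^q[\mu] := \frac{1}{\tau_{\mu+1}-\tau_\mu}\int_{\tau_\mu}^{\tau_{\mu+1}}\sum_{k\in K} a_{ik}^q(t)\,dt,
\qquad \xi_i[\mu]:=x_i(\tau_\mu).
\]
Then \eqref{lp1}--\eqref{lp2} are immediate from \eqref{he1}--\eqref{he2}, and \eqref{lp3} follows by integrating the flow dynamics \eqref{he3} over $[\tau_\mu,\tau_{\mu+1}]$ and pulling the mode\nobreakdash-constant factor $s^q$ out of the integral. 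Integrating \eqref{he4} gives \eqref{lp4}; using \eqref{he4} pointwise gives the upper bound in \eqref{lp6}; summing \eqref{he5} over the $m$ processors and integrating gives \eqref{lp5}; and nonnegativity is inherited. Finally, because $P(s^q)-P_{idle}$ is constant on each mode, the energy integral collapses termwise into the LP sum, so the two objectives agree. This direction is essentially routine.

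For the second claim, given a feasible $(\xi,w)$ for \eqref{probLP} and the output $(\sigma_{ik}^q[\mu],\eta_{ik}^q[\mu])$ of Algorithm~\ref{mc}, I would reconstruct the piecewise\nobreakdash-constant control, writing $\Delta_\mu:=\tau_{\mu+1}-\tau_\mu$,
\[
a_{ik}^q(t):=\begin{cases}1,& t\in\bigl[\tau_\mu+\sigma_{ik}^q[\mu]\Delta_\mu,\ \tau_\mu+\eta_{ik}^q[\mu]\Delta_\mu\bigr),\\[2pt] 0,&\text{otherwise},\end{cases}
\qquad t\in[\tau_\mu,\tau_{\mu+1}),
\]
and then obtain $x$ by integrating \eqref{he3} from the initial condition \eqref{he1}. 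Constraints \eqref{he7} and \eqref{he3} hold by construction, and \eqref{he1}--\eqref{he2} follow from \eqref{lp1}--\eqref{lp2} together with the conservation identity $\sum_{k\in K}(\eta_{ik}^q[\mu]-\sigma_{ik}^q[\mu])=w_i^q[\mu]$ built into the wrap\nobreakdash-around, which guarantees that the integrated work per interval reproduces the state trajectory $\xi$ exactly at the grid points.

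The hard part will be verifying the mutual\nobreakdash-exclusion constraints \eqref{he4} and \eqref{he5} for this reconstruction, which is precisely the correctness of McNaughton's wrap\nobreakdash-around algorithm restated in terms of the $(\sigma,\eta)$ outputs. The key observations are geometric: since $\sum_q w_i^q[\mu]\le 1$ by \eqref{lp4}, the contiguous block that the algorithm allots to task $T_i$ on the number line has total length at most one, so it straddles at most one integer cut; its two fragments are therefore assigned to adjacent processors on complementary, non\nobreakdash-overlapping sub\nobreakdash-intervals of $[\tau_\mu,\tau_{\mu+1})$, yielding \eqref{he4}. Since the aggregate length is at most $m$ by \eqref{lp5}, every block fits within the $m$ unit chunks, and within a single processor's chunk the task\nobreakdash-mode fragments are laid out end\nobreakdash-to\nobreakdash-end and hence occupy pairwise\nobreakdash-disjoint time\nobreakdash-slots, yielding \eqref{he5}. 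Because the per\nobreakdash-mode work is conserved, the reconstructed control incurs exactly the energy given by the LP objective, so optimality transfers in both directions and the two problems are equivalent.
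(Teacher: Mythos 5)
Your proposal is correct and follows essentially the same route as the paper's proof: the same interval-averaging construction $(\tau_{\mu+1}-\tau_\mu)w_i^q[\mu]=\int_{\tau_\mu}^{\tau_{\mu+1}}\sum_k a_{ik}^q(t)\,dt$ with $\xi_i[\mu]=x_i(\tau_\mu)$ in one direction, and the same piecewise-constant reconstruction from the $(\sigma,\eta)$ output of Algorithm~\ref{mc} in the other, with cost equality closing the argument. You simply fill in the details (notably the geometric correctness of the wrap-around for \eqref{he4}--\eqref{he5}) that the paper delegates to the properties of McNaughton's algorithm.
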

\begin{proof}
Given a solution to~(\ref{cprob}), choose  $w_i^q[\mu]$ such that
\begin{equation}
(\tau_{\mu+1}-\tau_\mu)w_i^q[\mu]=\int_{\tau_{\mu}}^{\tau_{\mu+1}}\sum_k a_{ik}^q(t)dt, \ \forall i,q,\mu. \label{eq:equiv}
\end{equation}
This  ensures (\ref{lp1})--(\ref{lp3}) are satisfied with $\xi_i[\mu]=x_i(\tau_\mu)$, $\forall i,\mu$. It  follows from~\eqref{he4} and~\eqref{he5} that~\eqref{lp4} and~\eqref{lp5} are satisfied, respectively. One can similarly verify~\eqref{lp6} holds.  

Given a solution to~(\ref{probLP}) and the output from Algorithm~\ref{mc} for all intervals. It follows from the properties of McNaughton's algorithm~\cite{mc1959} that only one task is assigned to a processor at a time if $a$ is chosen to be piecewise constant such that
$
a_{ik}^q(t) = 1$ when $\sigma_{ik}^q[\mu](\tau_{\mu+1}-\tau_\mu)\leq t -\tau_\mu<\eta_{ik}^q[\mu](\tau_{\mu+1}-\tau_\mu)
$ and $
a_{ik}^q(t) = 0$ otherwise, $\forall i,k,q$.  After verifying that~\eqref{eq:equiv} holds, one can show that~\eqref{he1}--\eqref{he7} are satisfied.

The result follows by noting that the  costs of the two problems are equal with the above choices.
\end{proof}

%
%

\subsection{Feedback Scheduler} \label{sec:feedback}
As can be seen, our open-loop optimal control problem is based on the estimated minimum execution time $\underline{x}_i$. The task will often finish earlier than expected, i.e.\ the actual minimum execution time $\underline{y}_i$ is often less than $\underline{x}_i$. Consider Figure~\ref{fig:fluidU}, which illustrates a fluid path of executing a task $T_i$. 
\begin{figure}[t]
\centering 
\includegraphics[width=\columnwidth]{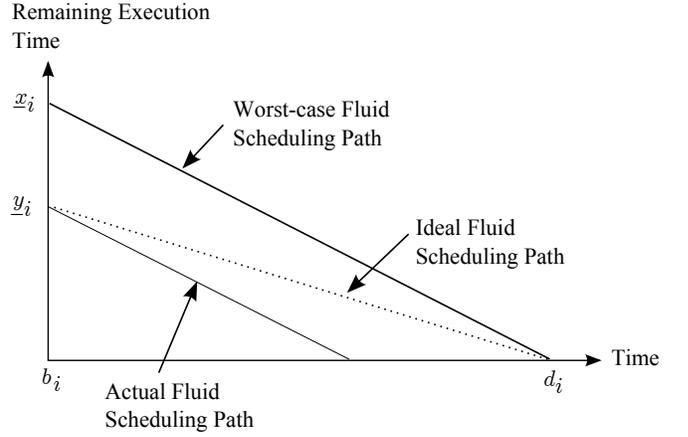}
\caption{Fluid scheduling model with uncertainty in the task execution time}
\label{fig:fluidU}
\end{figure}
Our open-loop algorithm follows a different path from the one that we really want to follow, i.e.\ the dotted line, due to uncertainty in task execution times. In other words, the open-loop algorithm can provide a solution that is overestimating the system workload, leading to higher energy consumption, due to the fact that the system operates at an unnecessarily higher speed. Therefore, it is better to feed back information whenever (i) a task finishes or (ii) a new task arrives at the system, in order to recalculate a new control action to respond to the changing workload. 

\begin{table*}[t]
\caption{Commercial processor details for simulation} \label{tab:pmsim}
\centering
\begin{tabular}{|l||c|c|c|c|c||c|c|c|c|}\hline
Processor type & \multicolumn{5}{c||}{XScale~\cite{xscale}} & \multicolumn{4}{|c|}{PowerPC 405LP~\cite{rusu04}} \\ \hline
Frequency (MHz) & 150 & 400 & 600 & 800 & 1000 & 33 & 100 & 266 & 333 \\
Speed	& 0.15 & 0.4 & 0.6 & 0.8 & 1.0 & 0.1 & 0.3 & 0.8 & 1.0 \\ 
Voltage (V) & 0.75 & 1.0 & 1.3 & 1.6 & 1.8 & 1.0 & 1.0 & 1.8 & 1.9 \\
Active Power (mW) & 80 & 170 & 400 & 900 & 1600 & 19 & 72 & 600 & 750 \\ \hline
Idle Power (mW) & \multicolumn{5}{c||}{40~\cite{Xu04}} & \multicolumn{4}{|c|}{12} \\\hline
\end{tabular}
\end{table*}

The overall architecture of our feedback scheduling system is given in Figure~\ref{fig:FBarch}, where the scheduler is called at two scheduling events. 
\begin{figure}[t]
\centering 
\includegraphics[width=\columnwidth]{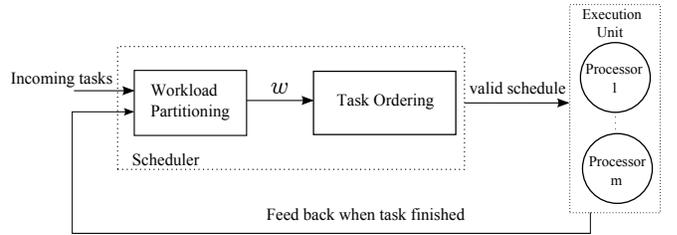}
\caption{Feedback scheduling architecture}
\label{fig:FBarch}
\end{figure}
One occurs when a task finishes its required executing workload/cycles on one of the processors and the other when a new task arrives. The scheduler is composed of two sub-units, i.e.\ a workload partitioning unit and a task ordering unit. By solving~(\ref{probLP}), the workload partitioning unit provides control input $w$ to the task ordering unit, which then uses McNaughton's wrap around algorithm to produce a valid schedule to the execution unit.

\begin{table}[t]
\caption{Simulation tasksets} \label{tab:taskset}
\centering
\begin{tabular}{|c|c|c|c|}\hline
$D$& $T_1$ & $T_2$ & $T_3$ \\ \hline
0.50 & (0,1,5) & (0,2,10) & (0,1.5,15)  \\ \hline
0.75 & (0,1,5) & (0,3.5,10) & (0,3,15) \\ \hline
1.00 & (0,2,5) & (0,4,10) & (0,3,15) \\ \hline
1.25 & (0,1,5) & (0,6.5,10) & (0,6,15)  \\ \hline
1.50 & (0,2,5) & (0,7,10) & (0,6,15)  \\ \hline
1.75 & (0,3,5) & (0,7.5,10) & (0,6,15) \\ \hline
2.00 & (0,4,5) & (0,6,10) & (0,9,15) \\ \hline
\multicolumn{4}{l}{Note: The second parameter of a task is $\underline{x}_i$; }\\
\multicolumn{4}{l}{$c_i$ can be obtained by multiplying $\underline{x}_i$ by $f_{max}$.} 
\end{tabular}
\end{table}

\section{Simulation and Results}\label{sec:sim}
To evaluate the performance of our feedback scheme, we consider a set of aperiodic tasks to be scheduled on two  commercial processors, namely a PowerPC 405LP and an XScale. The details of the two  processors are given in Table~\ref{tab:pmsim}. Two homogeneous systems composed of two processors of the same type were chosen. The energy consumed by executing each taskset, listed in Table~\ref{tab:taskset}, were evaluated. 

The minimum taskset density $D:=\sum_{i\in I}\delta_i$, a measurement of the utilization of computing resources in a given time interval, is defined as the sum of minimum task densities of all tasks within the system. The LP~\eqref{probLP} was modelled using OPTI TOOLBOX~\cite{opti2012} and solved with SoPlex~\cite{Wunderling1996}. 

For this simulation, we only consider the scheduling event when a task finishes. 
Three algorithms are compared: (i)~Feedback LP-DVFS, which is our LP-DVFS + McNaughton's wrap around algorithm proposed in Section~\ref{sec:feedback}, (ii)~Open-Loop LP-DVFS, which is our LP-DVFS without feedback information on finishing tasks, and (iii)~No mismatch/Ideal, which is our LP-DVFS with the actual minimum task execution times equal to the estimated, i.e.\ $\underline{x}_i=\underline{y}_i$. 

Figure~\ref{fig:rPlot1} shows results from executing the tasksets in Table~\ref{tab:taskset} onto two homogeneous multiprocessor system, composed of two of each processor type, with the estimation factor $\gamma_i=0.5,~\forall i\in I$. 
\begin{figure*}[t]
\centerline{\subfloat[PowerPC 405LP]{\includegraphics[width=\columnwidth]{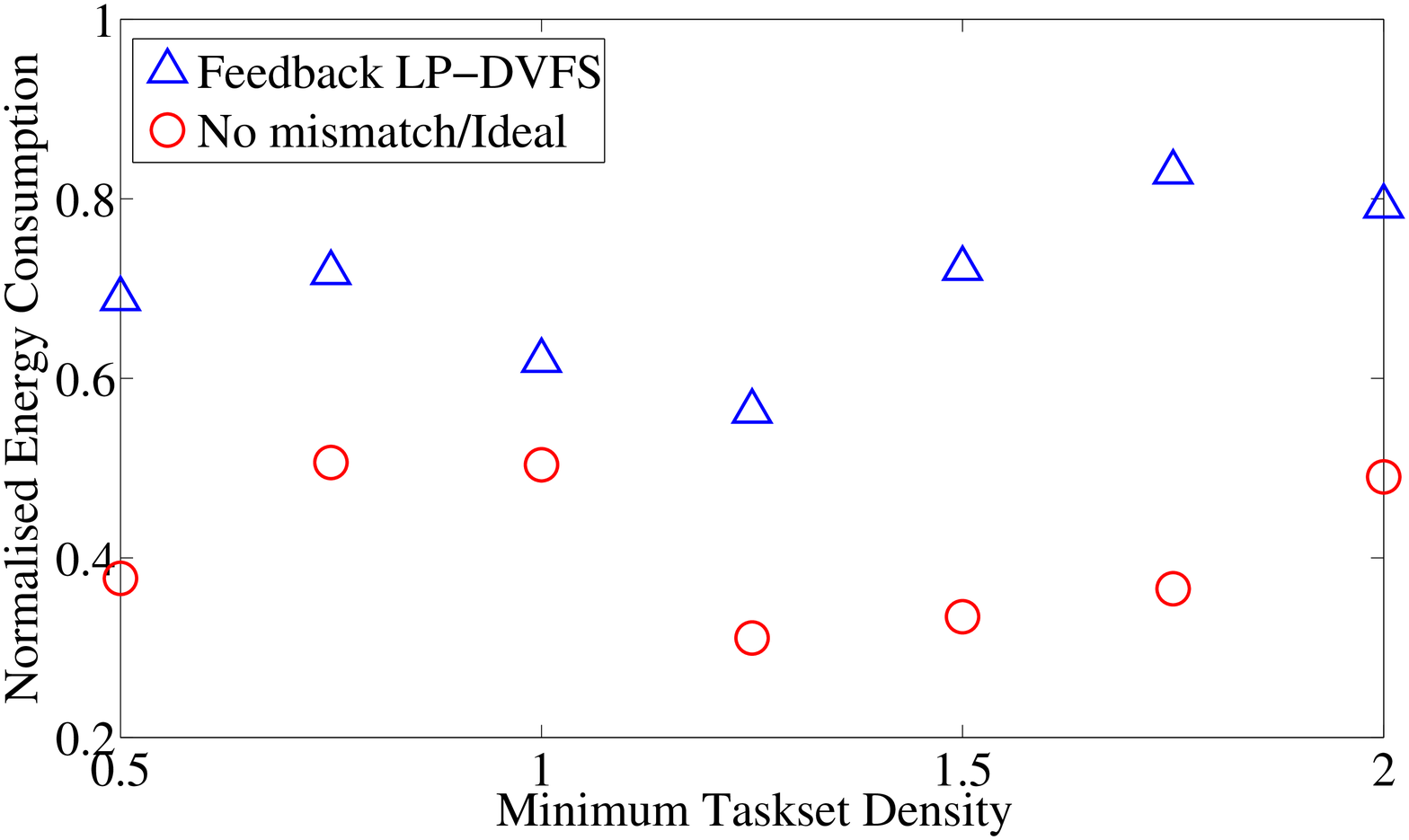}}
\subfloat[XScale]{\includegraphics[width=\columnwidth]{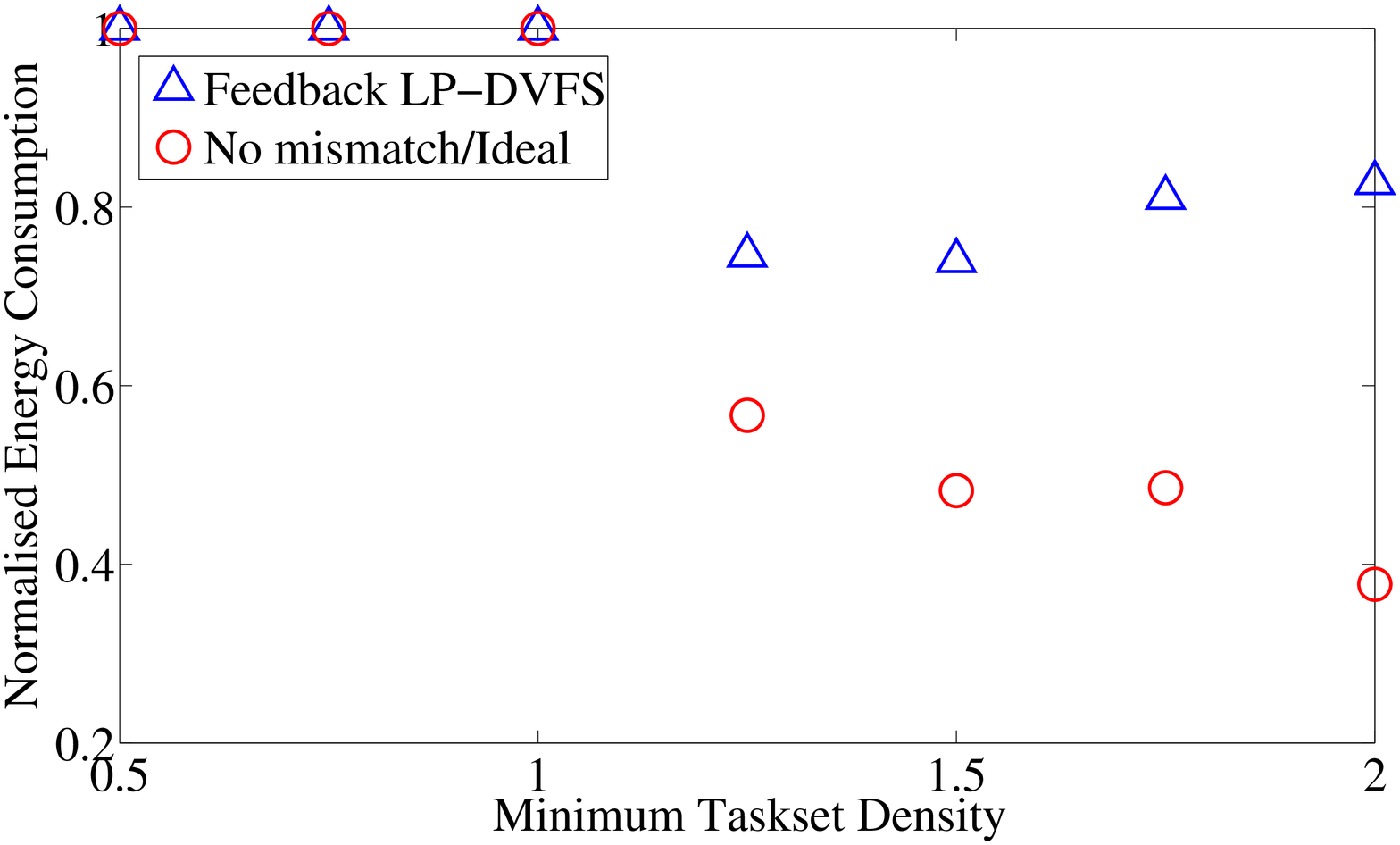}}}
\caption{Simulation results for different minimum taskset density with $\gamma_i=0.5,~\forall i\in I$.}
\label{fig:rPlot1}
\end{figure*}
The vertical axis is the total energy consumption normalised by the Open-Loop LP-DVFS algorithm. For a system composed of PowerPCs, the feedback scheme can save energy up to about 40\% compared to an open-loop scheme. However, for a system with XScale processors, the feedback scheme starts to perform better than the open-loop scheme only when the density is more than 1. Moreover, the  percentage saving of the XScale system is less than that of the PowerPC's. This is due to the differences in the distribution of speed levels of the two processor types, i.e.\ the XScale processor has more evenly distributed speed levels than that of the PowerPC; therefore, the optimizer can select the operating speed level that is closer to the optimal continous speed value.

The results from varying the estimation factor of the taskset with $D=1.25$ are shown in Figure~\ref{fig:rPlot2}. 
\begin{figure*}[t]
\centerline{\subfloat[PowerPC 405LP]{\includegraphics[width=\columnwidth]{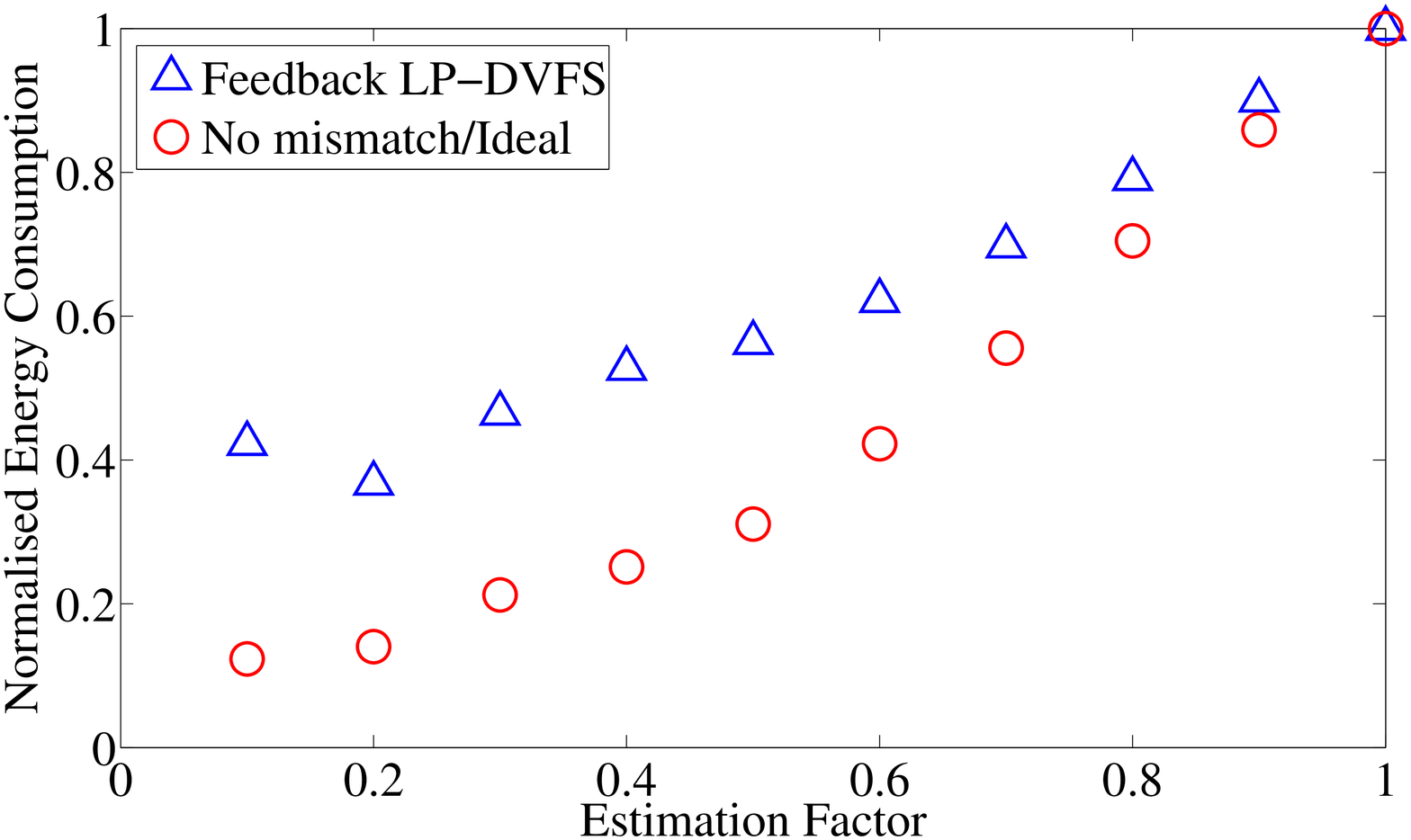}}
\subfloat[XScale]{\includegraphics[width=\columnwidth]{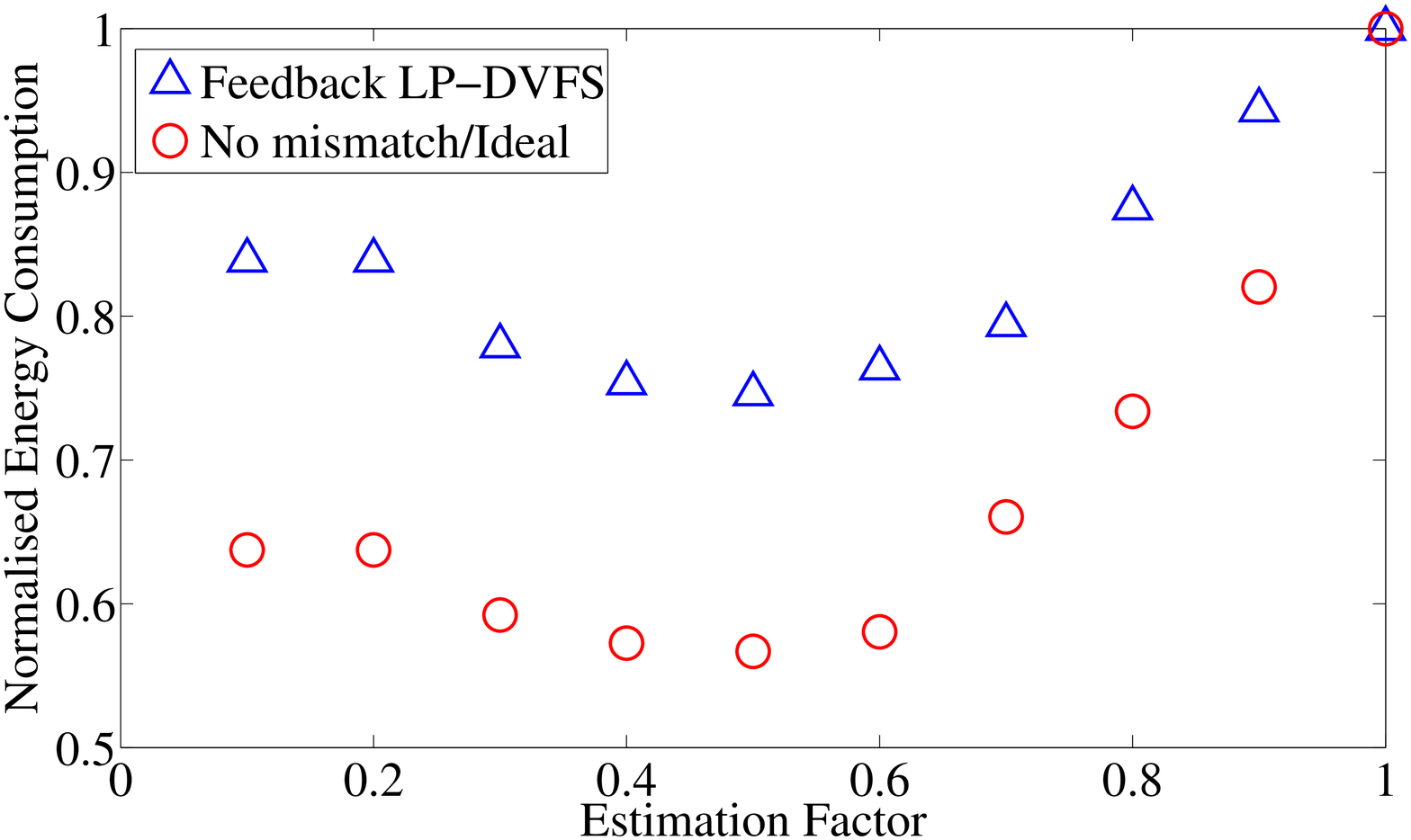}}}
\caption{Simulation results for different estimation factor $\gamma$ with $D=1.25$.}
\label{fig:rPlot2}
\end{figure*}
Note that, for this simulation, the estimation factors of all tasks are the same. For a PowerPC system, the energy saving is high when the estimation factor is low. In addition, the difference between the energy consumed by the feedback strategy and the ideal decreases as the estimation factor increases. On the other hand, for an XScale system, the maximum energy saving does not occur when the estimation factor is the lowest, but rather occurs at $\gamma = 0.5$. Furthermore, the energy consumption difference between the feedback and optimal/ideal is larger than that of the PowerPC's. Note that the energy saving varies with the tasksets, solutions from different LP solvers, and the task execution order.  Particularly, since the solutions are not unique, the choice of selecting the task execution order has an effect on the total energy consumption. 

\section{Conclusions and Future Work}\label{sec:con}
A feedback method was adopted to solve a multiprocessor scheduling problem with uncertainty in task execution times. We have shown that our proposed closed-loop optimal control scheduling algorithm performs better than the open-loop algorithm in terms of energy efficiency. Simulation results suggest that the difference between closed-loop and open-loop performance can be reduced by having a more refined distribution of operating speed levels. 

The work presented here can be extended in a number of ways. For a periodic task, an estimator could be incorporated to obtain a better performance. 
For further energy savings, a dynamic power management scheme (DPM), which determines when and how long the processor should be in the active or idle state, could also be integrated in the scheme.   

Finally, note that there are many links here to model predictive control~\cite{Mayne20142967} and it would therefore be of interest to investigate how methods developed in that community could be applied to the scheduling problem defined here. For example, one could extend the work to the problem of optimizing over feedback policies, rather than open-loop input sequences, as was done here. Efficient numerical methods, including distributed  cooperative schemes, could also be developed to solve the LP~\eqref{probLP} in real-time.
\balance
\bibliographystyle{IEEEtran}
\bibliography{FS}
\end{document}